\theoremstyle{plain}
\newtheorem{theorem}{Theorem}[section]
\theoremstyle{definition}
\numberwithin{equation}{section}
\begin{document}

\title[Existence of Black Holes Due to Concentration of Angular Momentum] {Existence of Black Holes Due to Concentration of Angular Momentum}

\author[Khuri]{Marcus A. Khuri}
\address{Department of Mathematics\\
Stony Brook University\\
Stony Brook, NY 11794, USA}
\email{khuri@math.sunysb.edu}

\thanks{The author acknowledges the support of
NSF Grant DMS-1308753.}

\begin{abstract}
We present a general sufficient condition for the formation of black holes due to concentration of angular momentum. This is expressed in the form of a universal inequality, relating the size and angular momentum of bodies, and is proven in the context of axisymmetric initial data sets for the Einstein equations which satisfy an appropriate energy condition. A brief comparison is also made with more traditional black hole existence criteria based on concentration of mass.
\end{abstract}
\maketitle

\section{Introduction}
\label{sec1} \setcounter{equation}{0}
\setcounter{section}{1}

The Trapped Surface Conjecture \cite{Seifert} and Hoop Conjecture \cite{Thorne} are concerned with
the folklore belief that if enough matter and/or gravitational energy are present
in a sufficiently small region, then the system must collapse to a black hole.
In more concrete terms, this belief is often realized by establishing a statement
of the following form.  Let $\Omega$ be a compact spacelike hypersurface satisfying an
appropriate energy condition in a spacetime $\mathcal{M}$.  There
exists a universal constant $\mathcal{C}>0$ such that if
\begin{equation}\label{1}
\mathrm{Mass}(\Omega)>\mathcal{C}\cdot\mathrm{Size}(\Omega),
\end{equation}
then $\Omega$ is either enclosed by, or has a nontrivial intersection with, a closed trapped surface.  The point is that the presence of a closed trapped surface implies that the spacetime $\mathcal{M}$ contains a singularity (or more precisely is null geodesically incomplete) by the Hawking-Penrose
Singularity Theorems \cite{HawkingEllis}, and assuming Cosmic Censorship \cite{Penrose} must therefore contain a black hole. Although this problem is well-studied, the general case is still open. In particular, previous results \cite{BeigOMurchadha,BizonMalecOMurchadha1,BizonMalecOMurchadha2,
Flanagan,Khuri,Malec1,Malec2,Wald1} require special auxiliary conditions, for instance assuming that the spacelike slice is spherically symmetric or maximal, whereas others \cite{Eardley,SchoenYau2,Yau} are not meaningful for slices with small extrinsic curvature.

While it is natural, based on intuition, to suggest that high concentrations of matter and/or gravitational energy should lead to black hole formation, here we will propose a 
less intuitive criterion for gravitational collapse. Namely, we will show that high concentrations of angular momentum alone are enough to induce collapse. In analogy with \eqref{1}, this will be realized by proving the
existence of a universal constant $\mathcal{C}>0$ such that if
\begin{equation}\label{2}
|\mathcal{J}(\Omega)|>\mathcal{C}\cdot\mathcal{R}^{2}(\Omega),
\end{equation}
then $\Omega$ is either enclosed by, or has a nontrivial intersection with, a closed trapped surface.
Here $\mathcal{J}(\Omega)$ represents total angular momentum, and $\mathcal{R}(\Omega)$ is a certain
radius which measures the size of $\Omega$. Thus, if a rotating body possesses enough angular momentum
and is sufficiently small, then the system must collapse to form a black hole. To the best of the author's knowledge, this is a new and previously uninvestigated process by which black holes may
form. In what follows, we will give a heuristic justification of \eqref{2} as well as a rigorous proof for axially symmetric rotating systems.

\section{Heuristic Evidence and Precise Formulation}
\label{sec2} \setcounter{equation}{0}
\setcounter{section}{2}

In \cite{Dain}, Dain introduced an inequality relating the size and angular momentum for General Relativistic bodies of the form
\begin{equation}\label{3}
\mathcal{R}^{2}(\Omega)\gtrsim\frac{G}{c^{3}}|\mathcal{J}(\Omega)|,
\end{equation}
where $G$ is the gravitational constant, $c$ is the speed of light, and $\gtrsim$ represents an
order of magnitude; the precise constant depends on the choice of radius $\mathcal{R}$. He presented
heuristic arguments which imply \eqref{3}, and which are based on the following four assumptions
and principles:
\medskip

(i) the body $\Omega$ is not contained in a black hole,\medskip

(ii) the speed of light $c$ is the maximum speed,\medskip

(iii) the reverse inequality of \eqref{1} holds for bodies $\Omega$ which
are not contained in a black hole,\medskip

(iv) inequality \eqref{3} holds for black holes.\medskip

Now suppose that \eqref{2} holds. Then \eqref{3} is violated, and hence one of the above four
statements cannot be valid. Since (ii) is firmly established, (iii) is closely tied to the Trapped Surface/Hoop Conjecture and is expected to hold, and (iv) has been proven \cite{Dain0},
it follows that assumption (i) should be false. In general terms, we conclude that if a body satisfies \eqref{2}, this should indicate the presence of a black hole.

A precise version of this conclusion will now be described, and then proven below.
Consider an initial data set $(M, g, k)$ for the Einstein equations. This consists of a 3-manifold $M$, (complete) Riemannian metric $g$, and symmetric 2-tensor $k$ representing the
extrinsic curvature (second fundamental form) of the embedding into spacetime, which satisfy the constraint equations
\begin{align}\label{4}
\begin{split}
\frac{16\pi G}{c^{4}}\mu &= R+(Tr_{g} k)^{2}-|k|^{2},\\
\frac{8\pi G}{c^{4}} J^{i} &= \nabla_{j}(k^{ij}-(Tr_{g} k)g^{ij}).
\end{split}
\end{align}
Here $\mu$ and $J$ are the energy and momentum densities of the matter fields, respectively, and $R$ is the scalar curvature of $g$. In terms of the 4-dimensional stress-energy tensor $T_{ab}$ we have $\mu=T_{ab}n^{a}n^{b}$ and $J_{i}=T_{ia}n^{a}$, where $n^{a}$ denotes the timelike unit normal to the slice $M$. A body $\Omega$ is a connected open subset of $M$ with compact closure and smooth boundary $\partial\Omega$.

We say that the initial data are axially symmetric if the group of isometries of the Riemannian manifold $(M,g)$ has a subgroup isomorphic to $U(1)$, and that the remaining quantities defining the initial data are invariant under the $U(1)$ action. In particular, if $\eta^{i}$ denotes the Killing field associated with this symmetry, then
\begin{equation}\label{5}
\mathfrak{L}_{\eta}g=\mathfrak{L}_{\eta}k=\mathfrak{L}_{\eta}\mu=\mathfrak{L}_{\eta}J=0,
\end{equation}
where $\mathfrak{L}_{\eta}$ denotes Lie differentiation. Furthermore, the total angular momentum  of the body $\Omega$ is defined by
\begin{equation}\label{6}
\mathcal{J}(\Omega)=\frac{1}{c}\int_{\Omega}J_{i}\eta^{i} d\omega_{g}.
\end{equation}
Axisymmetry is imposed primarily to obtain a suitable and well-defined notion of angular momentum for bodies. Note that in this setting gravitational waves have no angular momentum, so all the angular momentum is contained in the matter sources. Without this assumption quasi-local angular momentum is difficult to define \cite{Szabados}.
It will also be assumed that the following version of the dominant energy condition
holds on $\Omega$, namely
\begin{equation}\label{7}
\mu\geq|\vec{J}|+|J(e_{3})|
\end{equation}
where $(e_{1},e_{2},e_{3}=|\eta|^{-1}\eta)$ is an orthonormal frame field on $M$ and $\vec{J}=J(e_{1})e_{1}+J(e_{2})e_{2}$. Note that this is a stronger version of
the classical dominant energy condition which states
\begin{equation}\label{8}
\mu\geq|J|=\sqrt{|\vec{J}|^{2}+J(e_{3})^{2}}.
\end{equation}

Let us now consider how to measure the size of the body $\Omega$. A particularly pertinent measure
in the current setting, is a homotopy radius defined by Schoen and Yau in \cite{SchoenYau2}, which played a crucial role in their criterion for the existence of black holes due to concentration of matter. The Schoen/Yau radius, $\mathcal{R}_{SY}(\Omega)$, may be described as the radius of the largest torus that can be embedded in $\Omega$. More specifically, let $\Gamma$ be a simple closed curve which bounds a disk in $\Omega$, and let $r$ denote the largest distance from $\Gamma$ such that the set of all points within this distance forms a torus embedded in $\Omega$. Then $\mathcal{R}_{SY}(\Omega)$ is defined to be the largest distance $r$ among all curves $\Gamma$. For example, if $B_{\rho}$ is a ball of radius $\rho$ in flat space, then $\mathcal{R}_{SY}(B_{\rho})=\rho/2$. In analogy with \cite{Dain}, we define the radius that appears in \eqref{2} by
\begin{equation}\label{9}
\mathcal{R}(\Omega)=\frac{\left(\int_{\Omega}|\eta|d\omega_{g}\right)^{1/2}}{\mathcal{R}_{SY}(\Omega)}.
\end{equation}

With these definitions of angular momentum $\mathcal{J}$ and radius $\mathcal{R}$, we obtain a precise formulation of inequality \eqref{2}, save for the universal constant $\mathcal{C}$ to be described below. It will be shown that this inequality implies the existence of a closed trapped surface, or more accurately an apparent horizon. Recall that the strength of the gravitational field in the vicinity of a
2-surface $S\subset M$ may be measured by the null expansions
\begin{equation}\label{10}
\theta_{\pm}:=H_{S}\pm Tr_{S}k,
\end{equation}
where $H_{S}$ is the mean curvature with respect to the unit
outward normal. The null expansions measure the rate of change of area for a shell of light
emitted by the surface in the outward future direction
($\theta_{+}$), and outward past direction ($\theta_{-}$).  Thus
the gravitational field is interpreted as being strong near
$S$ if $\theta_{+}< 0$ or $\theta_{-}< 0$, in which case
$S$ is referred to as a future (past) trapped surface.
Future (past) apparent horizons arise as boundaries of future
(past) trapped regions and satisfy the equation $\theta_{+}=0$
($\theta_{-}=0$). Apparent horizons may be thought of as quasi-local notions of event horizons,
and in fact, assuming Cosmic Censorship, they must generically be contained inside
black holes \cite{Wald}.

\section{Inequality Between Size and Angular Momentum for Bodies}
\label{sec3} \setcounter{equation}{0}
\setcounter{section}{3}

In order to establish the existence of apparent horizons when inequality \eqref{2} is satisfied, we will utilize a device employed by Schoen and Yau \cite{SchoenYau2}. Namely, they showed that if a certain differential equation does not possess a regular solution, then an apparent horizon must be present in the initial data. This so called Jang equation is given in local coordinates by
\begin{equation}\label{11}
\left(g^{ij}-\frac{f^{i}f^{j}}{1+|\nabla f|^{2}}\right)\left(\frac{\nabla_{ij}f}
{\sqrt{1+|\nabla f|^{2}}}-k_{ij}\right)=0,
\end{equation}
where $f^{i}=g^{ij}\nabla_{j}f$. Geometrically, this expression is equivalent to the apparent horizon equation, but in the 4-dimensional product manifold $\mathbb{R}\times M$. When regular solutions do not exist, the graph $t=f(x)$ blows-up and approximates a cylinder over an apparent horizon in the base manifold $M$ (see \cite{HanKhuri}, \cite{SchoenYau1}). Whether or not the solution blows-up, is related to concentration of scalar curvature or rather matter density for the induced metric, $\overline{g}_{ij}=g_{ij}+\nabla_{i}f\nabla_{j}f$, on the graph. In this regard, it is important to have an explicit
formula \cite{BrayKhuri1,BrayKhuri2,SchoenYau1} for the scalar curvature of this metric, namely
\begin{equation}\label{12}
\overline{R}=\frac{16\pi G}{c^{4}}(\mu-J(v))+
|h-k|_{\overline{g}}^{2}+2|q|_{\overline{g}}^{2}
-2div_{\overline{g}}(q),
\end{equation}
where $h$ is the second fundamental form of the graph, $div_{\overline{g}}$
is the divergence operator with respect to $\overline{g}$, and $q$ and
$v$ are 1-forms given by
\begin{equation}\label{13}
v_{i}=\frac{f_{i}}{\sqrt{1+|\nabla f|^{2}}},\text{
}\text{ }\text{ }\text{ }
q_{i}=\frac{f^{j}}{\sqrt{1+|\nabla f|^{2}}}(h_{ij}-k_{ij}).
\end{equation}

Suppose now that the Jang equation has a regular solution over $\Omega$. One way to measure the
concentration of scalar curvature on this region is to estimate
the first Dirichlet eigenvalue, $\lambda_{1}$, of the operator $\Delta_{\overline{g}}-\frac{1}{2}\overline{R}$; here $\Delta_{\overline{g}}=\overline{g}^{ij}\overline{\nabla}_{ij}$ is the Laplace-Beltrami operator.
Let $\phi$ be the corresponding first eigenfunction, then
\begin{equation}\label{14}
\lambda_{1}=\frac{\int_{\Omega}\left(|\overline{\nabla}\phi|^{2}
+\frac{1}{2}\overline{R}\phi^{2}\right)d\omega_{\overline{g}}}
{\int_{\Omega}\phi^{2}d\omega_{\overline{g}}}.
\end{equation}
Notice that in the expression for the scalar curvature \eqref{12}, only the divergence term yields a potentially negative contribution to the quotient.
However, after applying the divergence theorem, we find that this term is dominated by the two
nonnegative terms $|\overline{\nabla}\phi|^{2}$ and $|q|_{\overline{g}}^{2}\phi^{2}$. It follows that
\begin{equation}\label{15}
\lambda_{1}\geq\frac{8\pi G}{c^{4}}\frac{\int_{\Omega}(\mu-J(v))\phi^{2}d\omega_{\overline{g}}}
{\int_{\Omega}\phi^{2}d\omega_{\overline{g}}}.
\end{equation}
In light of the axial symmetry \eqref{5}, if the Jang solution $f$ possesses axially symmetric
boundary conditions, then $\mathfrak{L}_{\eta}f=0$ on $\Omega$. In particular, the 1-form $v$ has
no component in the $\eta$ direction, which implies that $\mu-J(v)\geq\mu-|\vec{J}|$. It follows that
\begin{equation}\label{16}
\lambda_{1}\geq\frac{8\pi G}{c^{4}}\frac{\int_{\Omega}(\mu-|\vec{J}|)\phi^{2}d\omega_{\overline{g}}}
{\int_{\Omega}\phi^{2}d\omega_{\overline{g}}}=:\Lambda.
\end{equation}

With a lower bound for the first eigenvalue in hand, we may apply Proposition 1 from \cite{SchoenYau2} to conclude
\begin{equation}\label{17}
\overline{\mathcal{R}}_{SY}(\Omega)\leq \sqrt{\frac{3}{2}}\frac{\pi}{\sqrt{\Lambda}},
\end{equation}
where $\overline{\mathcal{R}}_{SY}$ denotes the Schoen/Yau radius with respect to the metric $\overline{g}$. Observe that since $\overline{g}\geq g$, we have $\overline{\mathcal{R}}_{SY}
\geq\mathcal{R}_{SY}$. Moreover, multiplying and dividing $\Lambda^{-1}$ by the quantity
$\int_{\Omega}|\eta|d\omega_{g}\left(\int_{\Omega}(\mu-|\vec{J}|)|\eta|d\omega_{g}
\right)^{-1}$ yields
\begin{equation}\label{18}
\Lambda^{-1}\leq\frac{c^{4}\mathcal{C}_{0}}{8\pi G}\frac{\int_{\Omega}|\eta|d\omega_{g}}
{\int_{\Omega}(\mu-|\vec{J}|)|\eta|d\omega_{g}},
\end{equation}
where
\begin{equation}\label{19}
\mathcal{C}_{0}=
\frac{\max_{\Omega}\left(\mu-|\vec{J}|\right)}
{\min_{\Omega}\left(\mu-|\vec{J}|\right)}
\end{equation}
if $\mu-|\vec{J}|>0$ in $\Omega$, and $\mathcal{C}_{0}=\infty$ if $\mu-|\vec{J}|$ vanishes at some point of $\Omega$. Hence
\begin{equation}\label{20}
\int_{\Omega}(\mu-|\vec{J}|)|\eta|d\omega_{g}\leq
\frac{3\pi c^{4}\mathcal{C}_{0}}{16G}\frac{\int_{\Omega}|\eta|d\omega_{g}}
{\mathcal{R}_{SY}^{2}(\Omega)}.
\end{equation}
All together these arguments produce a general relation between the size and angular momentum of bodies.

\begin{theorem}\label{thm1}
Let $(M,g,k)$ be an axially symmetric initial data set
which contains no compact apparent horizons. Assume that either $M$ is asymptotically flat, or has a strongly untrapped boundary, that is $H_{\partial M}>|Tr_{\partial M}k|$. Then for any body $\Omega\subset M$
satisfying the energy condition \eqref{7}, the following
inequality holds
\begin{equation}\label{21}
|\mathcal{J}(\Omega)|\leq\frac{3\pi c^{3}\mathcal{C}_{0}}{16G}\mathcal{R}^{2}(\Omega).
\end{equation}
\end{theorem}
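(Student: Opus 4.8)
The plan is to run the Schoen--Yau Jang-equation machinery in the axisymmetric setting; the derivation of the intermediate inequality \eqref{20} has essentially been carried out in the discussion preceding the theorem, so the real content is (a) justifying the standing assumption there that the Jang equation admits a regular, axially symmetric solution, and (b) converting \eqref{20} into the stated bound on angular momentum by means of the energy condition \eqref{7}.

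First I would dispose of the degenerate case: if $\mu-|\vec{J}|$ vanishes at a point of $\overline{\Omega}$, then $\mathcal{C}_{0}=\infty$ and \eqref{21} is vacuous, so I may assume $\mu-|\vec{J}|>0$ on $\overline{\Omega}$, whence $\min_{\Omega}(\mu-|\vec{J}|)>0$ by compactness and all the quantities in \eqref{18}--\eqref{20} are finite and positive. Next comes the crucial step: solve the Jang equation \eqref{11} on $M$ subject to axially symmetric boundary conditions --- decay at infinity in the asymptotically flat case, and a condition compatible with the strongly untrapped boundary $\partial M$ in the other case. By the results of Schoen--Yau \cite{SchoenYau1} (and the refinements in \cite{HanKhuri}), the only obstruction to a globally regular solution is the formation of a cylindrical blow-up over a compact apparent horizon, which is excluded by hypothesis; and since $(g,k)$ and the boundary data are $U(1)$-invariant, the solution inherits $\mathfrak{L}_{\eta}f=0$, so in \eqref{13} the $1$-form $v$ is orthogonal to $\eta$ and hence $J(v)\leq|\vec{J}|$.

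Granting a regular axisymmetric Jang solution, the computations \eqref{12}--\eqref{20} apply on $\Omega$: the scalar curvature identity \eqref{12}, integration by parts of the term $-\mathrm{div}_{\overline{g}}(q)\phi^{2}$ (no boundary contribution since the Dirichlet eigenfunction $\phi$ vanishes on $\partial\Omega$), completing the square against $|\overline{\nabla}\phi|^{2}+|q|_{\overline{g}}^{2}\phi^{2}$, and discarding $|h-k|_{\overline{g}}^{2}\geq0$ together with $J(v)\leq|\vec{J}|$ yield $\lambda_{1}\geq\Lambda$; Proposition 1 of \cite{SchoenYau2} then gives \eqref{17}; the comparison $\overline{g}\geq g$ gives $\mathcal{R}_{SY}(\Omega)\leq\overline{\mathcal{R}}_{SY}(\Omega)$; and rescaling $\Lambda^{-1}$ by the factor $\int_{\Omega}|\eta|\,d\omega_{g}\big/\int_{\Omega}(\mu-|\vec{J}|)|\eta|\,d\omega_{g}$ produces \eqref{20}. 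Finally, writing $\eta=|\eta|e_{3}$, the energy condition \eqref{7} yields the pointwise bound $|J_{i}\eta^{i}|=|\eta|\,|J(e_{3})|\leq|\eta|\,(\mu-|\vec{J}|)$, so from \eqref{6}
\[
|\mathcal{J}(\Omega)|\le\frac{1}{c}\int_{\Omega}(\mu-|\vec{J}|)\,|\eta|\,d\omega_{g},
\]
and substituting \eqref{20} and then the definition \eqref{9} of $\mathcal{R}(\Omega)$ gives \eqref{21}.

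The step I expect to be the genuine obstacle is the existence of a globally regular, axially symmetric solution of the Jang equation under the sole assumption that there are no compact apparent horizons --- this is where all of the hard PDE analysis resides, and one must also check that the prescribed boundary/asymptotic behavior is compatible with axisymmetry and with the absence of blow-up. By comparison, the remaining steps form a routine sequence of integrations by parts, Cauchy--Schwarz, and constant-chasing; the only minor subtlety there is ensuring $\Lambda>0$ so that Proposition 1 of \cite{SchoenYau2} applies, which is precisely why the case $\mathcal{C}_{0}=\infty$ was set aside at the outset.
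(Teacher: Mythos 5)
Your proposal is correct and follows essentially the same route as the paper: solve the Jang Dirichlet problem using the untrapped boundary/asymptotic sphere, use the absence of apparent horizons to get regularity, inherit axisymmetry so that $v\perp\eta$, run the eigenvalue estimate and Proposition 1 of \cite{SchoenYau2} to reach \eqref{20}, and finish with the pointwise consequence $|J(e_{3})|\leq\mu-|\vec{J}|$ of the energy condition \eqref{7}, which is exactly the computation \eqref{22}. Your explicit dismissal of the $\mathcal{C}_{0}=\infty$ case and your identification of the Jang existence/regularity theory as the locus of the hard analysis are both consistent with how the paper structures the argument.
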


\begin{proof}
The conditions on the boundary of $M$ or its asymptotics guarantee the existence of a strongly untrapped 2-surface. For instance, if $M$ is asymptotically flat, then a large coordinate sphere in the asymptotic end will be strongly untrapped. This property allows one to solve the Dirichlet boundary
value problem \cite{SchoenYau2} for the Jang equation \eqref{11}, with $f=0$ on $\partial M$ or on an appropriate coordinate sphere in the asymptotic end. The solution $f$ will then be axisymmetric. Moreover, the absence of apparent horizons ensures that $f$ is a regular solution. We may then apply the arguments preceding this theorem to obtain \eqref{20}.

Now observe that with the help of the energy condition \eqref{7},
\begin{align}\label{22}
\begin{split}
|\mathcal{J}(\Omega)|\leq\frac{1}{c}\int_{\Omega}|J(\eta)|d\omega_{g}
&=\frac{1}{c}\int_{\Omega}|J(e_{3})||\eta|d\omega_{g}\\
&=\frac{1}{c}\int_{\Omega}\left[|J(e_{3})|+|\vec{J}|-\mu
+(\mu-|\vec{J}|)\right]|\eta|d\omega_{g}\\
&\leq\frac{1}{c}\int_{\Omega}(\mu-|\vec{J}|)|\eta|d\omega_{g}.
\end{split}
\end{align}
Combining \eqref{20} and \eqref{22} produces the desired result.
\end{proof}

This theorem is of independent interest, and generalizes the main result of \cite{Dain} in two ways. That is, Dain's inequality between the size and angular momentum of bodies required two strong hypotheses, namely that the initial data are maximal $Tr_{g}k=0$ and that the matter density $\mu$ is constant. Here we have removed both of these hypotheses at the expense of a slightly weaker inequality. More precisely, when $\mu-|\vec{J}|$ is constant, the two inequalities may be compared directly. The universal constant obtained by Dain, $\frac{\pi c^{3}}{6G}$, is less than the universal
constant of \eqref{21}, $\frac{3\pi c^{3}}{16G}$. Recently, other related inequalities have been derived by Reiris \cite{Reiris}, which also require the maximal hypothesis.

\section{Criterion for the Existence of Black Holes}
\label{sec4} \setcounter{equation}{0}
\setcounter{section}{4}

The proof of Theorem \ref{thm1} naturally leads to a black hole existence result, due to its reliance
on solutions of the Jang equation. As noted, this technique for producing black holes was originally
exploited by Schoen and Yau \cite{SchoenYau2}. Namely, if the reverse inequality of \eqref{21} holds, then we must conclude that the Jang equation does not admit a regular solution. This implies that the Jang solution blows-up and ensures the presence of an apparent horizon. We now state the main result.

\begin{theorem}\label{thm2}
Let $(M,g,k)$ be an axially symmetric initial data set, such that either $M$ is asymptotically flat, or has a strongly untrapped boundary, that is $H_{\partial M}>|Tr_{\partial M}k|$. If $\Omega\subset M$
is a bounded region satisfying the energy condition \eqref{7}, with
\begin{equation}\label{23}
|\mathcal{J}(\Omega)|>\frac{3\pi c^{3}\mathcal{C}_{0}}{16G}\mathcal{R}^{2}(\Omega),
\end{equation}
then $M$ contains an apparent horizon of spherical topology and in particular contains a closed
trapped surface.
\end{theorem}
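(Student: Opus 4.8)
The plan is to argue by contraposition against Theorem~\ref{thm1}, using the same Jang-equation machinery that produced inequality \eqref{21}. First I would recall the setup of the proof of Theorem~\ref{thm1}: the hypotheses on $\partial M$ or the asymptotics furnish a strongly untrapped $2$-surface, which allows one to pose the Dirichlet problem for the Jang equation \eqref{11} with vanishing boundary data on $\partial M$ or on a large coordinate sphere in the asymptotic end. If this boundary-value problem admitted a \emph{regular} (globally defined, bounded) axisymmetric solution $f$ on all of $M$, then restricting attention to $\Omega$ and running the eigenvalue estimate \eqref{14}--\eqref{20} verbatim would yield
\begin{equation*}
\int_{\Omega}(\mu-|\vec{J}|)|\eta|\,d\omega_{g}\leq\frac{3\pi c^{4}\mathcal{C}_{0}}{16G}\frac{\int_{\Omega}|\eta|\,d\omega_{g}}{\mathcal{R}_{SY}^{2}(\Omega)},
\end{equation*}
and then the energy-condition chain \eqref{22} would force $|\mathcal{J}(\Omega)|\leq\frac{3\pi c^{3}\mathcal{C}_{0}}{16G}\mathcal{R}^{2}(\Omega)$, contradicting \eqref{23}. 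Hence the Jang equation admits \emph{no} regular solution with these boundary conditions.

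Second, I would invoke the dichotomy from the Schoen--Yau existence theory for the Jang equation: given a strongly untrapped boundary surface, the Dirichlet problem always has a solution \emph{unless} the solution graph $t=f(x)$ blows up, and in that case the blow-up occurs precisely along cylinders over surfaces satisfying $\theta_{+}=0$ or $\theta_{-}=0$, i.e.\ over apparent horizons in $M$ (cf.\ \cite{SchoenYau1,HanKhuri}). Since we have just shown no regular solution exists, blow-up must occur, and therefore $M$ contains an apparent horizon. Here one uses the axisymmetric version of the construction: because the data \eqref{5} and the boundary conditions are $U(1)$-invariant, the solution (and hence the blow-up set) is axisymmetric, so the horizon produced is an axisymmetric closed $2$-surface.

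Third, I would pin down the topology. An axisymmetric closed surface in $M$ is a priori a sphere or a torus, but a compact apparent horizon stable as a marginally (outer) trapped surface in a slice satisfying the dominant energy condition must, by the Galloway--Schoen-type classification (or directly by the original Schoen--Yau argument combined with the generalized Jang-blowup analysis), have spherical topology; the torus is excluded because the relevant stability operator would then have nonnegative first eigenvalue forcing the induced metric to be flat and the ambient geometry to be degenerate, which is incompatible with being a genuine blow-up horizon. This gives the apparent horizon of spherical topology, and since $\theta_{+}=0$ or $\theta_{-}=0$ on it while the strongly untrapped surface sits strictly outside, a small inward perturbation produces a genuine closed trapped surface, completing the statement.

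The main obstacle I anticipate is not the eigenvalue/energy-condition bookkeeping, which is a direct rerun of Section~3, but rather making the ``no regular solution $\Rightarrow$ apparent horizon of \emph{spherical} topology'' step fully rigorous: one must carefully cite or reproduce the generalized Jang-equation blow-up analysis to guarantee that the blow-up set is a nonempty smooth marginally trapped surface (not merely a distributional obstruction), and then separately justify the topological restriction for axisymmetric marginally trapped surfaces under the energy condition \eqref{7}. These are essentially known results, so the burden is one of correct invocation rather than new proof, but it is the point where the argument is most delicate.
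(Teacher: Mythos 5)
Your proposal is correct and follows essentially the same route as the paper: the paper proves Theorem \ref{thm2} precisely by contraposition of Theorem \ref{thm1}, noting that \eqref{23} rules out a regular Jang solution, so the Schoen--Yau/Han--Khuri blow-up analysis forces the graph to degenerate along a cylinder over an apparent horizon, whose spherical topology is part of that existence theory. Your added care about rigorously invoking the blow-up dichotomy and the topological classification is a reasonable elaboration of the same argument rather than a different one.
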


It should be observed that Theorems \ref{thm1} and \ref{thm2} are independent of the particular
matter model, and only require an energy condition which prevents the matter from traveling faster than the speed of light.

Whether or not such a process, by which high concentrations of angular momentum leads to gravitational collapse, can occur in nature, appears to be an interesting open problem.
On the theoretical side, it is important to understand the types of geometries which admit
an inequality of the form \eqref{23}. In this regard we note that the inequality cannot hold
in the maximal case. This is due to the fact, explained at the
end of the previous section, that the constant in Dain's inequality \cite{Dain} is smaller than
the constant in \eqref{23}. This is analogous to the situation with Schoen and Yau's criterion
for black hole formation, in which a stronger inequality holds for bodies in the maximal case,
thus preventing such initial data from satisfying their hypotheses for the existence of trapped
surfaces. Thus, the geometries which satisfy the Schoen/Yau criterion require large amounts of
extrinsic curvature; see \cite{Malec2} for a discussion concerning this topic. We expect that the
same holds true for Theorem \ref{thm2}.

Let us now compare the above result with more traditional black hole existence criteria based on concentration of mass. In general terms, we have shown that a body undergoes gravitational collapse if its total angular momentum and radius satisfy an inequality of the form
\begin{equation}\label{24}
\frac{G}{c^{3}}|\mathcal{J}|\gtrsim\mathcal{R}^{2},
\end{equation}
whereas a version of the Hoop Conjecture asserts that
\begin{equation}\label{25}
\frac{2Gm}{c^{2}}>\mathcal{R}
\end{equation}
is sufficient for collapse, where $m$ denotes rest mass and the left-hand side is the Schwarzschild radius $\mathcal{R}_{s}$. Consider the fastest
spinning pulsar \cite{Pulsar} known to date, PSR J1748-2446ad. Its angular velocity is $\omega\approx4.5\times 10^{3}rad\text{ }\!s^{-1}$, it has a radius of $\mathcal{R}\approx 15km$, and consists of two solar masses $m\approx2M_{\odot}=4\times 10^{30}kg$. It follows that
\begin{equation}\label{26}
\sqrt{\frac{G}{c^{3}}|\mathcal{J}|}
=\sqrt{\frac{G}{c^{3}}m\omega\mathcal{R}^{2}}\approx 3.2km
\text{ }\text{ }\text{ }\text{ }\text{ }\text{ }\text{ }\text{ and }\text{ }\text{ }\text{ }\text{ }\text{ }\text{ }\text{ }
\frac{2Gm}{c^{2}}\approx5.9km.
\end{equation}
Upon comparing these values with the radius, we find that although it is close, neither \eqref{24} nor \eqref{25} is satisfied, as expected. Moreover, the similarity of the values in \eqref{26} seems to suggest that the criteria \eqref{24} and \eqref{25} may apply in similar regimes (at least for astronomical objects), however this is dependent on the magnitude of the optimal constant in \eqref{24} which is not addressed in this paper. It is thus a question worthy of further investigation to determine the optimal constant.

We may also compare the criteria \eqref{24} and \eqref{25} in the realm of elementary particles. It turns out that here, \eqref{24} offers additional insight due to the quantization of angular momentum. More precisely, for a particle of spin $s$ its angular momentum is given by
\begin{equation}\label{27}
|\mathcal{J}|=\sqrt{s(s+1)}\hbar,
\end{equation}
where $\hbar=1.05\times 10^{-30}cm^{2}s^{-1}kg$ is Planck's constant. According to \eqref{24}, in order for such a particle to remain stable gravitationally, its radius should be bounded below by a multiple of $[s(s+1)]^{1/4}l_{p}$ where $l_{p}=\sqrt{\frac{G\hbar}{c^{3}}}$
is the Planck length. A similar, although somewhat less convincing argument is known, from which one may derive the same conclusion based on criteria \eqref{25}. Namely, the gravitational stability of a particle implies that its radius should not be less than the Schwarzschild radius, $\mathcal{R}\geq\mathcal{R}_{s}$. Moreover, the length scale at and below which quantum effects become very important is given by the Compton wavelength
$\lambda=\frac{\hbar}{mc}$, and hence we should have $\mathcal{R}_{s}\geq\lambda$. This, however, implies that $\mathcal{R}_{s}\geq\sqrt{2}l_{p}$, which again yields a lower bound for the radius of the particle in terms of the Planck length.

\end{document}